\documentclass[AMA,Times1COL]{WileyNJDv5} 

\articletype{Article Type}%

\received{Date Month Year}
\revised{Date Month Year}
\accepted{Date Month Year}
\journal{Journal}
\volume{00}
\copyyear{2023}
\startpage{1}
\usepackage{algorithm}
\usepackage{algpseudocode}  
\usepackage{amsmath}
\usepackage{amssymb}
\usepackage{makecell}
\usepackage{spreadtab}
\usepackage{booktabs}
\usepackage{array}
\raggedbottom

\begin{document}

\title{Lightweight Fault Detection Architecture for NTT on FPGA}

\author[1]{Rourab Paul}

\author[2,3]{Paresh Baidya}

\author[3]{Krishnendu Guha}

\authormark{TAYLOR \textsc{et al.}}
\titlemark{PLEASE INSERT YOUR ARTICLE TITLE HERE}

\address[1]{\orgdiv{Computer Science and Engineering}, \orgname{Shiv Nadar University Chennai}, \orgaddress{\state{Tamil Nadu}, \country{India}}}

\address[2]{\orgdiv{Department of Mathematics}, \orgname{National Institute of Technology Jamshedpur}, \orgaddress{\state{} \country{India}}}
\address[3]{\orgdiv{Computer Science Engineering}, \orgname{SOA University, Bhubaneswar}, \orgaddress{\state{Odisha}, \country{India}}}

\address[4]{\orgdiv{Computer Science and Information Technology}, \orgname{University College Cork}, \orgaddress{\state{} \country{Ireland}}}

\corres{Corresponding author Rourab Paul, This is sample corresponding address. \email{rourabpaul@snuchennai.edu.in}}



\abstract[Abstract]{
Post-Quantum Cryptographic (PQC) algorithms are mathematically secure and resistant to quantum attacks but can still leak sensitive information in hardware implementations due to natural faults or intentional fault injections. The intent fault injection in side-channel attacks reduces the reliability of crypto implementation in future generation network security procesors. In this regard, this research proposes a lightweight, efficient, recomputation-based fault detection module implemented on a Field Programmable Gate Array (FPGA) for Number Theoretic Transform (NTT). The NTT is primarily composed of memory units and the Cooley–Tukey Butterfly Unit (CT-BU), a critical and computationally intensive hardware component essential for polynomial multiplication. NTT and polynomial multiplication are fundamental building blocks in many PQC algorithms, including Kyber, NTRU, Ring-LWE, and others. 
In this paper, we present a fault detection method called :  Recomputation with a Modular Offset (\texttt{REMO}) for the logic blocks of the CT-BU using Montgomery Reduction and another method called Memory Rule Checkers for the memory components used within the NTT.
The proposed fault detection framework sets a new benchmark by achieving high efficiency with significant low implementation cost. It occupies only 16 slices and a single DSP block, with a power consumption of just $3\text{mW}$ in $Artix$-$7$ FPGA. The \texttt{REMO}-based detection mechanism achieves a fault coverage of 87.2\% to 100\%, adaptable across various word sizes ($w$), fault bit counts ($\eta$), and fault injection modes. Similarly, the Memory Rule Checkers demonstrate robust performance, achieving 50.7\% to 100\% fault detection depending on $\eta$ and the nature of injected faults.
}

\keywords{Polynomial Multiplication, FPGA, Cooly-Tukey Butterfly, Memory, NTT, Modular Multiplication, Montgomery Reduction.}

\jnlcitation{\cname{%
\author{Rourab Paul$^1$},
\author{Paresh Baidya$^{23}$},
 and
\author{Krishnendu Guha$^4$}}.
\ctitle{On simplifying ‘incremental remap’-based transport schemes.} \cjournal{\it J Comput Phys.} \cvol{2021;00(00):1--18}.}

\maketitle

\renewcommand\thefootnote{}

\renewcommand\thefootnote{\fnsymbol{footnote}}
\setcounter{footnote}{1}
\section{Introduction}

Faults in PQC hardware can occur naturally due to shrinking device dimensions, which increase the probability of errors caused by ion interference or electromagnetic radiation. These densely packed devices are particularly susceptible to internal faults induced by ion beam radiation in the Configurable Logic Block (CLB) of FPGAs \cite{niko}. Additionally, in side-channel attacks, adversaries deliberately introduce faults into PQC hardware and analyze variations in physical parameters, such as power consumption and execution time, to extract sensitive information.
Therefore, mathematically secure PQC algorithms may still be vulnerable in hardware implementations to various side-channel attacks, such as power analysis, timing analysis, electromagnetic (EM) attacks and fault injection attacks. Specifically, fault injection attacks \cite{ravi} \cite{keita} exploit deliberate faults introduced into PQC hardware to expose intermediate states or reveal secret data. Barenghi et al. \cite{barenghi} discussed various cost-effective methods for injecting faults into existing cryptographic systems. Recently, Primas et al. \cite{primas} and Prasanna et al. \cite{ravi3} conducted a Soft Analytical Side-Channel Attack (SASCA) on the NTT, employing a probabilistic model utilizing power and timing data.
\par
Preventing fault occurrences in PQC hardware provides a robust defence against both natural and intentional faults.
To address various fault detection strategies, several fault detection approaches have been proposed in the literature. We categorize the existing work into two groups: (i) Fault detection solutions for various components of crypto algorithms, and (ii) Fault detection specifically targeting the NTT (Number Theoretic Transform).

\subsection{Fault Detection in various Crypto Algorithms [\cite{saeed}  \cite{canto} \cite{howe} \cite{canto2} \cite{kermani} \cite{cintas} \cite{kamal} \cite{ahmadECSM} \cite{ahmadiNAF} \cite{dominguez}]}
The modular exponentiation is a crucial operation in cryptography.
Saeed et.al \cite{saeed} proposed a  recomputation based fault detection model for modular exponent $x^ymod~n$ implemented on an ARM Cortex-A72 processors, AMD/Xilinx Zynq Ultrascale+ and Artix-7 FPGA. Their approach involves recomputing the modular exponent using encoded values of $x$ and $y$. The proposed method achieves near 100\% error detection accuracy with approximately 7\% computational overhead and less than 1\% area overhead compared to unprotected architectures.
 Saeed et.al \cite{saeed} employed the popular Right-to-Left Exponentiation algorithm \cite{menezes}, where the number of iterations depends on the number of `$1$'s in exponent $y$. This conditional operation makes the algorithm vulnerable \cite{kocher} to timing analysis attacks. 
 Canto et al. \cite{canto} presented various fault detection mechanisms design for finite-field operations, including addition, subtraction, multiplication, squaring, and inversion, specifically within the framework of the code-based McEliece cryptosystem. The proposed approaches used different error detection techniques, such as regular parity, interleaved parity, CRC-2, and CRC-8, to enhance the fault detection capabilities. These methods are applied to different elements of the Key Generator, focusing on improving error detection accuracy in operations such as multiplications and inversions in the finite field $GF(2^{13})$.
 Howe et al. \cite{howe} designed a fault detection module for error sampler used in lattice based cryptography. They introduced three methods: low-cost test, standard test, and expensive test designed for FPGA implementation to verify whether the output distribution of the error sampler matches its expected Gaussian or binomial shape.
 The study in \cite{canto2} proposed three fault detection models for  Multiply and Accumulate (MAC) unit of lattice-based Key Encapsulation Mechanisms (KEMs) and applied them to hardware accelerators of three NIST PQC finalists: FrodoKEM, SABER, and NTRU. The proposed schemes are based on the recomputation technique such as Recomputing with Shifted Operands (RESO), Recomputing with Negated Operands (RENO) and  Recomputing
with Scaled operands (RECO), implemented on a Kintex UltraScale+ device. Their implementation on FPGA devices demonstrates minimal overhead and significant error detection coverage, ensuring compatibility with other cryptographic systems utilizes hardware accelerators.
 Kermani et al. \cite{kermani} introduced a novel error detection scheme for Galois Counter Mode (GCM) implemented on a 65nm Application Specific Integrated Circuit (ASIC) platform, specifically developed to improve data integrity verification. The proposed methodology improved compatibility with various block ciphers and finite field multipliers by employing a technique Re-computation of Swapped Ciphertext and Additional Authenticated Blocks (RESCAB). In this work, the primary computational unit, Galois Hash (GHASH) computed over the finite field $GF(2^{128})$, while the RESCAB module processed swapped inputs concurrently within another $GF(2^{128})$ instance. Then, fault detection achieved by comparing the outputs of GHASH and RESCAB. The architecture in \cite{kermani}, demonstrated significant design flexibility and reliability, as validated through hardware implementations and fault simulation analyzes.
  Cintas et al. \cite{cintas} proposed an error detection mechanism for Goppa arithmetic units used in the McEliece cryptosystem. They used the algebraic structure of composite fields in this cryptosystem. Their approach included implementing a Parity Checker for various sub-blocks of McEliece. The proposed methods in \cite{cintas} were not limited to arithmetic units but were also suitable for core functions of other public-key cryptosystems that relied on composite fields as their mathematical foundation. Additionally, the authors presented FPGA-based implementations of Goppa polynomial evaluation (GPE) and analyzed the performance overhead for different configurations.
  In \cite{kamal}, the authors investigated various techniques to strengthen the resilience of NTRUEncrypt hardware implementations, against fault analysis attacks. They used the cipher's algebraic properties, and proposed countermeasures based on error detection codes and spatial/temporal redundancies. A detailed evaluation of these methods was provided by comparing their error detection efficiency along with their impact on decryption throughput and hardware area.
   Ahmadi et al. \cite{ahmadECSM} proposed a fault detection scheme for the window method in elliptic curve scalar multiplication (ECSM). They introduced refined algorithms and hardware implementations to address both permanent and transient errors. Using simulation-based fault injection, the schemes achieved extensive error coverage with less than 3\% clock cycle overhead on Cortex-A72 processors and a 2\% area increase on FPGAs. These results demonstrated efficient error detection with minimal resource overhead.
Ahmadi et al. \cite{ahmadiNAF} addressed a research gap in fault detection for $\tau$NAF conversion and Koblitz curve cryptosystems. They proposed an algorithm-level fault detection scheme for the single $\tau$NAF conversion algorithm and developed two additional fault detection schemes for the double $\tau$NAF conversion algorithm. The feasibility of these methods was evaluated through implementation on ARMv7 and ARMv8 architectures.
This paper \cite{dominguez} introduced fault-tolerant and error-detecting structures for elliptic curve scalar multiplication (ECSM). The proposed methods utilized recomputation, parallel computation, and encoding-decoding schemes to enhance error detection during ECSM operations. These schemes used scalar and point randomization techniques and were implemented on Xilinx Virtex 2000E FPGA. This work achieved a high error coverage with minimal computational overhead. 

\subsection{Fault Detection for NTT [\cite{khan} \cite{sven} \cite{sarker} \cite{jati} \cite{ravi3} \cite{rafael}]}
 NTT is used as both forward and inverse transforms and is one of the most widely used components in many post-quantum cryptography (PQC) algorithms.
 The recent NIST standardization of Kyber as Module Lattice-based Key Encapsulation Mechanism (ML-KEM) and Dilithium as Module Lattice-based Digital Signature Algorithm (ML-DSA) includes the Number Theoretic Transform (NTT) and KECCAK as the most critical hardware components. The NTT block consists of three main components: (i) Random Access Memories (RAMs) to store polynomial coefficients, error polynomials and secret vectors, (ii) Read-Only Memories (ROMs) to store constants such as twiddle factors, and (iii) a CT-BU to perform the core computations of the NTT. 
 Khan et al. \cite{khan} implemented a fault-tolerant memory modules (RAMs and ROMs) for storing polynomial coefficients and twiddle factors of the NTT using Hamming codes and parity bits, resulting in a 29.2\% overhead on a Virtex-7 FPGA. They implemented four fault-tolerant variants for the memory modules of the NTT used in Kyber: (i) Area Optimized (A-O) using Hamming Code, (ii) Run-Time Optimized (R-O) using Hamming Code, (iii) Area Optimized (R-O) using Hamming Code combined with Parity and (iv) Run-Time Optimized (R-O) using Hamming Code combined with Parity.
Sven et al. \cite{sven} implemented a fault-tolerant model for the Number Theoretic Transform (NTT) that can resist faults in multiplication with a twiddle factor and the addition in a butterfly operation on the ARM Cortex-M4 platform. They used interpolation and, evaluation and inverse NTT method to detect the faults inside the CT-BU. They have taken a polynomial coefficient $f(X)$, and its inverse NTT is $f(w^j)$. If the Eq. \ref{eq:int} is satisfied, it indicates that no fault occurred during the NTT computation.
\begin{align}
f(u)&=evaluation(NTT^{-1}(f(w^j))) =Interpolation(f(w^j)) 
\label{eq:int}
\end{align}
Polynomial Evaluation and Interpolation is a method that evaluates a polynomial at powers of a primitive root of unity during the forward NTT and recovers the original polynomial coefficients through interpolation during the inverse NTT, all over a finite field. However, the interpolation and evaluation processes introduce additional 3160 and 2979 clock cycles, respectively, resulting in a significant 72\% timing overhead for Dilithium.

Sarker et al. \cite{sarker} implemented a fault detection scheme for the HW/SW co-design of NTT on Spartan-6 and Zynq UltraScale+ platforms, resulting in 12.74\% resource overhead and approximately 20\% power overhead. They used recomputing with negative operands (RENO). Depending on the placement of the RENO block in the logic path, Sarker et al. \cite{sarker} proposed three versions of the NTT. 
To the best of our knowledge, no existing fault detection solution in the literature has addressed both the logic unit of the NTT (CT-BU) and the memory units separately.
\par
Apart from the fault detection schemes proposed in \cite{khan}, \cite{sven}, and \cite{sarker}, there are a few NTT implementations that incorporate certain precautions to make side-channel attacks on the NTT more difficult. 
Jati et al. \cite{jati} implemented a side channel attack protected configurable Kyber processor on Artix-7 FPGA. Different components of the configurable Kyber processor employ different fault detection methodologies. For example, alongside the original state machine, a duplicated inverted-logic state machine was used to verify the control flow integrity of the Kyber processor.  Specifically for the NTT operation in the Kyber processor, they employed a randomized memory addressing technique. Instead of using linear increments for the memory addresses of the RAMs and ROMs in the CT-BU, the addresses were randomized. This randomization of memory access in the CT-BU helps decorrelate the relationship between power consumption and NTT iterations, thereby significantly improving resistance against side-channel attacks on the NTT.
Rafael et al. \cite{rafael} proposed a locally masked NTT scheme on Artix-7 FPGA in which the input and output are masked with random twiddle factors. This approach effectively prevents the leakage of computational patterns, thereby enhancing resistance against side-channel attacks. Ravi et al. \cite{ravi2} implemented both local masking and memory access randomization techniques in their NTT design on the ARM Cortex-M4 platform to enhance resistance against side-channel attacks.

\par 
To the best of our knowledge, existing literature lacks any lightweight fault detection scheme that independently targets both the logic component (CT-BU) and the memory subsystem of the NTT.
 In this paper we propose two fault detection schemes for CT-BU and Memory Units (RAMs and ROMs) of the NTT core.
Our fault detection scheme for the CT-BU targets Montgomery reduction unit. It is based on the RECO method but is neither RESO nor RENO; rather, it is a recomputation with a modular offset (\texttt{REMO}).
 It is important to note that the proposed \texttt{REMO} based fault-tolerant model can serve as a generic fault-resistant method for any Montgomery multiplication and reduction. To the best of our knowledge, the \texttt{REMO}-based fault detection in the CT-BU of the NTT is the first of its kind. We also propose a memory address rule checker for the RAMs and ROMs used in the NTT to detect any ambiguities in memory addressing.
 The contribution of the paper can be summerized as:
\begin{itemize}
	\item This paper introduces a novel fault detection technique based on a modified word-wise Montgomery reduction algorithm, targeted for the Cooley-Tukey Butterfly Unit (CT-BU) in the NTT architecture. Unlike existing methods, the proposed \texttt{REMO} scheme integrates fault detection into the core arithmetic logic, offering protection without requiring significant implementation cost. The customizable word size ($w$) provides significant flexibility in usage of slices without compromising the fault detection efficiency. Although it is designed for the CT-BU, the versatility of this approach extends seamlessly to any hardware architecture performing modular polynomial multiplication using Montgomery reduction,
	\item This work also introduces the Memory Rule Checker (\texttt{Memory RC}), a lightweight and effective mechanism to ensure structural integrity in memory access patterns. Unlike traditional memory protection techniques that focus solely on content integrity, the proposed \texttt{Memory RC} monitors and validates the correctness of address sequences during all read/write operations of polynomial coefficients (Generate Matrix for Kyber), error vectors, secret vectors, etc in RAM and twiddle factor retrievals from ROM, across both forward and inverse NTT executions. To the best of our knowledge, this is the first approach that systematically targets address-level faults in memory-intensive cryptographic datapaths.
	\item The proposed lightweight fault detection framework introduces negligible implementation overhead across leading post quantum cryptographic schemes, including Kyber, Crystals-Dilithium, Falcon, and NTRU. Uniquely designed for modular arithmetic and memory bound operations, this model delivers high fault detection accuracy, even under diverse fault scenarios and varying numbers of corrupted bits. This is a robust protection with minimal overhead across multiple PQC standards in resource-constrained FPGA deployments.
\end{itemize}
The organization of the article is as follows:
Sections \ref{sec:remo} and \ref{sec:mem} provide detailed descriptions of the proposed fault detection methods: \texttt{REMO} for the CT-BU and the Memory Rule Checker for the memory units, respectively. The detailed hardware architecture of the NTT, including the CT-BU, RAMs, and ROMs, along with the \texttt{REMO} and Memory Rule Checker, is presented in Section \ref{sec:impl}, while the results are discussed in Section \ref{sec:res}. Finally, the conclusions are provided in Section \ref{sec:con}.

\section{Fault Detection in CT-BU}
\label{sec:remo}
The CT-BU is the most computationally intensive hardware block in any polynomial multiplication, as well as both forward and reverse NTT operations.
As shown in Equ. \ref{equ:u} and Equ. \ref{equ:v}, the CT-BU involves two major consecutive steps shown separately in line \ref{uplusv} and \ref{uminusv} of Algorithm \ref{algo:intt}.
\begin{equation}
\label{equ:u}
U=\alpha[j+k] 
\end{equation}
\begin{equation}
\label{equ:v}
V=\alpha[j+k+\frac{i}{2}] \times \omega~mod~q
\end{equation}
Here the $U$ and $V$ are used to calculate the coefficients of inverse NTT $\overline{\alpha}$, $\omega$ is a primitive $n^{th}$ root of unity and $q$ is a prime number. The calculation of $V$ involves a modular reduction on the product of a polynomial coefficient $\alpha$ and the primitive $n^{th}$ root of unity $\omega$, which is the most hardware-intensive and latency-critical operation in the NTT transformation. We use popular Montgomery reduction technique which allows an efficient hardware implementation of modular multiplication without computing modular reduction operation on the product of $\alpha$ and $\omega$. However, the Montgomery reduction operation remains the most hardware-intensive process in the CT-BU and NTT transformation. We have chosen Montgomery reduction over Barrett reduction because Barrett requires more intermediate computations than Montgomery in word-wise form on FPGA. Muller et al. \cite{muller} show that Montgomery reduction often outperforms Barrett in FPGA implementations, especially for word-wise operations.

\subsection{REMO Method}
\label{sec:method}
In our fault detection method, rather than computing the Montgomery reduction on all $l$ bits of the polynomial coefficient $\alpha(x)$ at once, we operate on smaller, fixed word sizes of $w$ bits from the total $l$ bits of $\alpha(x)$ where $w \leq l$. This modification of Montgomery reduction is required for three reasons. 
\begin{itemize}
\item The partial recomputation technique for fault detection requires intermediate data.
\item Lower value of $w$ reduces the overhead of hardware resources as $w\leq l$. 
\item The tunable $w$ can adjust the speed, resource usage, and power consumption of the design, offering architectural flexibility in optimization.
\end{itemize}

\begin{algorithm}
  \caption{Iterative NTT Algorithm}
  \label{algo:intt}
  \begin{algorithmic}[1]
    \State \textbf{Input:} $\alpha(x)$, $\omega$, $q$
    \State \textbf{Output:} $\overline{\alpha}(x)$
    \For{$i = 2$ to $l$ by $2 \times i$}
      \For{$j = 0$ to $i/2 - 1$}
        \For{$k = 0$ to $n - 1$ step $i$}
          \State $\omega_i = \omega[2^{(i-1)k}]$
          \State $U = \alpha[j+k]$  \label{line:u}
          \State $V = \text{MMRFD}(\alpha[j+k+i/2], \omega, q)$ \label{line:mmrfd}
          \State $\overline{\alpha}[j+k] = U + V$ \label{uplusv}
          \State $\overline{\alpha}[j+k+i/2] = U - V$ \label{uminusv}
        \EndFor
        \State $\omega = \omega \cdot \omega_i$
      \EndFor
    \EndFor
    \State \Return $\overline{\alpha}$
  \end{algorithmic}
\end{algorithm}

\subsection{Modified Montgomery Reduction with REMO}
The proposed fault detection method for CT-BU is recomputation based technique where encoding inputs $\alpha$ of Montgomery reduction can detect permanent and transient errors.
This paper proposes a word-wise modified Montgomery reduction that takes two operands: the multiplicand $\alpha$, the multiplier $\beta$, a modular base $q$, and the modular inverse $q'$. Here $\alpha=\alpha[j+k+\frac{i}{2}]$ and $\beta=\omega$. For the hardware implementation of Montgomery reduction, the number of bits $l$ in $\alpha$, $\beta$, $q$, and $q'$ must be divisible by the word size $w$. As shown in Algorithm \ref{algo:mmrfd}, lines \ref{line:padalpha} and \ref{line:padbeta} pad $w-p$ zeros to $\alpha$ and $\beta$ to ensure that $l$ is divisible by $w$. As shown in lines  \ref{line:alphadash} and  \ref{line:betadash}, the zero-padded $\alpha$ and $\beta$ are then stored in $\alpha'$ and $\beta'$, respectively. This modified Montgomery reduction involves two primary computations of $\mu_i$ and $\gamma_i$. Here $\mu_i$ depends on $\gamma_i$, $aw_i$ and $\beta'$; $\gamma_i$ depends on $\mu_i$, $aw_i$ and $\beta'$ where $aw_i=(\alpha'_{iw+w-1}..\alpha'_{iw})$. The proposed fault detection method, which relies on recomputation, calculates additional values $\mu_i^f$ and $\gamma_i^f$ using an encoded form of $aw_i$, denoted $aw_i^f$. As shown in modified Montgomery Reduction for Fault Detection (MMRFD) Algorithm \ref{algo:mmrfd}, line \ref{line:awif}, the value of $aw^f_i$ is $(\alpha'_{iw+w-1}..\alpha'_{iw})+K.q$. If there is no fault in $\alpha[j+k+\frac{i}{2}]$ or in $\omega$, then $\gamma_i^f$ with $aw_i^f$ and $\gamma_i$ with $aw_i$ will be the same, where $\beta = \omega$. Conventional Montgomery reduction on $l$ bits calculates $\alpha.\beta.R^{-1} \mod q$. In encoded form it calculates ($\alpha + k. q).\beta.R^{-1} \mod q$. Both these result must be the same as $k. q.\beta.R^{-1} \mod q$ term will be canceled out as $k$ and $R^{-1}$ are integers. However, the proposed modified word-wise Montgomery reduction operates on $w$ bits instead of $l$ bits in each loop iteration. This modified word-wise approach still computes the same intermediate values in $\gamma$ for both the encoded and non-encoded forms.

\begin{algorithm}[!htb]
\caption{Modified Montgomery Reduction for Fault Detection in Hardware with REMO: MMRFD($\alpha$, $\beta$, $q$)}
\label{algo:mmrfd}
\begin{algorithmic}[1]
\State \textbf{Input:} $\alpha(x) = (\alpha_{l-1}, \ldots, \alpha_0)$, $\beta(x) = (\beta_{l-1}, \ldots, \beta_0)$, $q = (q_{l-1}, \ldots, q_0)$
\State \hspace{1cm} with $R = w^l$, $\gcd(q, w)$
\State \textbf{Output:} $\gamma, f$

\State $p \gets l \bmod w$
\If{$p \neq 0$}
  \State $\alpha' \gets \text{Pad with } (w - p)$ zeros $\parallel \alpha(x)$ \label{line:padalpha}
  \State $\beta' \gets \text{Pad with } (w - p)$ zeros $\parallel \beta(x)$  \label{line:padbeta}
\Else
  \State $\alpha' \gets \alpha(x)$ \label{line:alphadash}
  \State $\beta' \gets \beta(x)$  \label{line:betadash}
\EndIf

\State $\gamma \gets (0, \ldots, 0)$
\State $\gamma^f \gets (0, \ldots, 0)$
\State $\mu \gets (0, \ldots, 0)$
\State $f \gets (0, \ldots, 0)$

\For{$i = 0$ to $(l + p)/w - 1$}
  \State $aw_i \gets (\alpha'_{iw+w-1}, \ldots, \alpha'_{iw})$
  \State $\mu_i \gets \left((\gamma_{w-1}, \ldots, \gamma_0) + aw_i \cdot (\beta'_{w-1}, \ldots, \beta'_0)\right) \cdot q' \bmod 2^w$ \label{line:mui}
  \State $\gamma_i \gets \left(\gamma_i + aw_i \cdot \beta + \mu_i \cdot q\right) / 2^w$  \label{line:gammai}

  \State $aw^f_i \gets (\alpha'_{iw+w-1}, \ldots, \alpha'_{iw}) + K \cdot q$ \label{line:awif}
  \State $\mu_i^f \gets \left((\gamma^f_{w-1}, \ldots, \gamma^f_0) + aw^f_i \cdot (\beta'_{w-1}, \ldots, \beta'_0)\right) \cdot q' \bmod 2^w$ \label{line:muif}
  \State $\gamma_i^f \gets \left(\gamma_i^f + aw^f_i \cdot \beta + \mu_i^f \cdot q\right) / 2^w$ \label{line:gammaif}

  \If{$\gamma_i \neq \gamma_i^f$}
    \State $f_i \gets 1$
  \Else
    \State $f_i \gets 0$
  \EndIf
\EndFor

\State \Return $\gamma, f$
\end{algorithmic}
\end{algorithm}

\begin{lemma}
If we divide $l$ bit $\alpha'$ in $w$ bit word-wise (segments), each word of $\alpha'$ can be expressed as: $aw_i=(\alpha'_{iw+w-1}..\alpha'_{iw})$ and the encoded  word of $\alpha'$is represented as: $aw^f_i=(\alpha'_{iw+w-1}..\alpha'_{iw})+K.q$ where $K$ is a constant and 
$q$ is the modulus. Then  $\gamma_i$ and $\gamma_i^f$ computed in the $i^{th}$ loop from  $aw_i$ and  $aw^f_i$ respectively must be same. 
\end{lemma}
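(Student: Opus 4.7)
The plan is to prove the lemma by induction on the loop index $i$, using two structural facts: by construction $aw_i^f - aw_i = K \cdot q$ (line \ref{line:awif}), and $q'$ satisfies the standard Montgomery relation $q \cdot q' \equiv -1 \pmod{2^w}$, which is precisely what makes the numerator in line \ref{line:gammai} divisible by $2^w$. The inductive hypothesis is that the accumulators $\gamma$ and $\gamma^f$ coincide at the start of iteration $i$; the base case is immediate because both $\gamma$ and $\gamma^f$ are initialized to the zero vector before the loop.

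For the inductive step, I would first derive the relationship between $\mu_i^f$ and $\mu_i$. Expanding line \ref{line:muif} with $aw_i^f = aw_i + Kq$, and using the inductive hypothesis on the low word of $\gamma^f$, yields
\begin{equation*}
\mu_i^f \equiv \mu_i + K \cdot q \cdot \beta_0 \cdot q' \equiv \mu_i - K \cdot \beta_0 \pmod{2^w},
\end{equation*}
where $\beta_0$ denotes the low $w$-bit word of $\beta'$ and the second equivalence uses $q q' \equiv -1 \pmod{2^w}$. Substituting into line \ref{line:gammaif} and subtracting the update in line \ref{line:gammai}, the difference simplifies to
\begin{equation*}
\gamma_i^f - \gamma_i \;=\; \frac{q \bigl( K\beta + \mu_i^f - \mu_i \bigr)}{2^w}.
\end{equation*}
The key observation is that $\mu_i^f - \mu_i = -K\beta_0 + t \cdot 2^w$ for some integer $t$, while $K\beta - K\beta_0 = K \cdot 2^w \cdot \beta_{\text{hi}}$ is itself divisible by $2^w$ when $\beta = \beta_0 + 2^w \beta_{\text{hi}}$ is written in base $2^w$. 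Hence the numerator is an integer multiple of $2^w$, and the division yields an integer multiple of $q$. This already establishes $\gamma_i \equiv \gamma_i^f \pmod{q}$, which is the natural invariant of word-wise Montgomery reduction under the encoding $\alpha \mapsto \alpha + Kq$.

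The main obstacle is lifting this modular congruence to the bit-for-bit equality that the fault indicator $f_i$ in Algorithm \ref{algo:mmrfd} actually tests. I would handle this by invoking the parameter regime in which the scheme is deployed: for the NTT instances used in Kyber, Dilithium, Falcon, and NTRU the twiddle factor $\omega$ (and hence $\beta$) fits in a single word of width $w$, so $\beta_{\text{hi}} = 0$; additionally $K$ is chosen small enough that $\mu_i - K\beta_0$ remains in $[0, 2^w)$, which forces $t = 0$. Under these two constraints both summands contributing to $\gamma_i^f - \gamma_i$ vanish and the two accumulators agree exactly at every iteration, as claimed. The delicate part of the write-up will be making explicit the bound on $K$ relative to $\beta_0$ and the smallest valid $\mu_i$, since this is precisely the condition separating a provably fault-free comparison from one that could trigger false positives.
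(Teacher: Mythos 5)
Your core algebra is the same as the paper's: both arguments expand $\mu_i^f$ using $aw_i^f=aw_i+Kq$, invoke $q\cdot q'\equiv -1 \pmod{2^w}$ to obtain $\mu_i^f\equiv \mu_i-K\beta_0 \pmod{2^w}$, and substitute into the $\gamma$-update to conclude $\gamma_i^f-\gamma_i=q\cdot t$ for an integer $t$. The induction framing you add is only implicit in the paper (the accumulators are both initialized to zero), and you are more careful than the paper in carrying the $2^w$-wraparound term of $\mu_i^f$ explicitly; up to the congruence $\gamma_i^f\equiv\gamma_i\pmod q$ the two proofs coincide.

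The divergence is in how the congruence is lifted to the exact equality that the comparator $f_i$ tests, and here your honesty exposes a real issue but your proposed repair does not hold in the paper's own setting. The paper closes the gap with a one-line appeal to ``applying the Montgomery transformation,'' i.e., the final reduction modulo $q$ annihilates the $q\cdot t$ term --- which implicitly concedes that the per-iteration registers may differ by a multiple of $q$, exactly the point you flag. Your fix assumes $\beta$ fits in a single $w$-bit word so that $\beta_{\mathrm{hi}}=0$; but the reported instantiations use $l=12$ or $l=24$ with $w\in\{2,4,8\}$, so $\beta=\omega'$ spans several words and $K(\beta-\beta_0)/2^w=K\beta_{\mathrm{hi}}$ is generically nonzero. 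Your second condition, that $\mu_i-K\beta_0$ stay in $[0,2^w)$, is data-dependent (it involves the running value of $\mu_i$) and cannot be enforced uniformly by any choice of $K>0$ unless $K\beta_0\equiv 0\pmod{2^w}$. So neither your argument nor the paper's establishes bit-for-bit equality of $\gamma_i$ and $\gamma_i^f$ as literally stated; what both actually prove is $\gamma_i^f=\gamma_i+q\cdot(\text{integer})$, and the lemma is only sound if the comparison is understood modulo $q$ (equivalently, on the finally reduced Montgomery outputs). Stating the result as a congruence, as your intermediate conclusion does, would be the correct formulation.
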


\begin{proof}
From algorithm \ref{algo:mmrfd}, line \ref{line:mui} and \ref{line:gammai} We have
\begin{align*}
\mu_i&=[(\gamma_{w-1},..,\gamma_0)+aw_i.(\beta'_{w-1},..,\beta'_0)].q'~\%~2^w\\
\gamma_i&=[\gamma_i+aw_i(\beta'_{l-1},..,\beta'_0)+\mu_iq]/2^w
\end{align*}
Now replace the $aw_i$ by $aw_i^f$
\begin{align*}
\mu_i&=[(\gamma_{w-1},..,\gamma_0)+\underbrace{(aw_i+K.q)}_{aw^f_i}.(\beta'_{w-1},..,\beta'_0)].q'~\%~2^w\\
&=[(\gamma_{w-1},..,\gamma_0)+aw_i.(\beta'_{w-1},..,\beta'_0)+k.q.(\beta'_{w-1},..,\beta'_0)]q' \% 2^w  \\
&=[(\gamma_{w-1},..,\gamma_0)+aw_i.(\beta'_{w-1},..,\beta'_0)]q'+k.\overbrace{q.q'}^{-1}.(\beta'_{w-1},..,\beta'_0) \% 2^w \\ 
&=[(\gamma_{w-1},..,\gamma_0)+aw_i.(\beta'_{w-1},..,\beta'_0)]q'
-k.(\beta'_{w-1},..,\beta'_0) \% 2^w 
\end{align*}
Therefore,
\begin{align*}
 \mu_i+k.(\beta'_{w-1},..,\beta'_0)&=[(\gamma_{w-1},..,\gamma_0)+aw_i.(\beta'_{w-1},..,\beta'_0)]q' \% 2^w 
 \end{align*}
Now,
 \begin{align*}
\gamma_i^f&=[\gamma_i^f+aw_i.(\beta'_{l-1},..,\beta'_0)+\mu_i.q]/2^w\\
&=[\gamma_i+\underbrace{(aw_i+K.q)}_{aw^f_i}.(\beta'_{l-1},..,\beta'_0)+ \underbrace{[(\gamma_{w-1},.,\gamma_0)+aw_i.(\beta'_{w-1},.,\beta'_0)]q'-k.(\beta'_{w-1},.,\beta'_0)]}_{u_i}.q]/2^w\\
&=[\gamma_i+aw_i.(\beta'_{l-1},..,\beta'_0)+\mu_i.q]/2^w+ q.\underbrace{\frac{k}{2^w}.[(\beta'_{l-1},..,\beta'_0)-(\beta'_{w-1},..,\beta'_0)]}_{t}\\
&=[\gamma_i+aw_i.(\beta'_{l-1},..,\beta'_0)+\mu_i.q]/2^w+q.t\\
&=\gamma_i+q.t
\end{align*}
As $t$ is an integer, after applying the Montgomery transformation, the final value of $\gamma_i^f$ is:
 \begin{align*}
 \gamma_i^f&= \gamma_i^f.R~\%~q= (\gamma_i+q.t).R~\%~q=\gamma_i
\end{align*}
\end{proof}
\subsection{Hardware Architecture of REMO: $\gamma_i^f$ \& $\gamma_i$}
The values of $\gamma_i$ and $\gamma_i^f$ shown in lines \ref{line:gammai} and \ref{line:gammaif} of Algorithm \ref{algo:mmrfd} are computed by two hardware blocks named $\gamma_i$ $Gen$ and $\gamma_i^f$ \texttt{REMO}, respectively. The $\gamma_i$ $Gen$ uses three multipliers ($\times$) and two adders ($+$), whereas the $\gamma_i^f$ \texttt{REMO} requires four multipliers ($\times$) and three adders ($+$). The $\gamma_i$ $Gen$ and $\gamma_i^f$ \texttt{REMO} use one right shifter to shift $\alpha'$ in each clock cycle, generating word-wise values $aw_i$. Here $\omega'$ is $\beta'$ and $\omega'_0$ $\beta'_{w-1},..,\beta'_0$ of Algorithm \ref{algo:mmrfd}. Modulus and division operations are computationally expensive on FPGA hardware. Therefore, the modulus operations: $\mod~2^w$ in Algorithm~\ref{algo:mmrfd}, as shown in line~\ref{line:mui} and line~\ref{line:muif}, are implemented by retaining only the least significant $w$ bits of $\mu_i$ and $\mu_i^f$. On the other hand, the division operations: /$2^w$ in Algorithm~\ref{algo:mmrfd}, as shown in line~\ref{line:gammai} and line~\ref{line:gammaif}, are implemented by discarding the least significant $w$ bits of $\gamma_i$ and $\gamma_i^f$. These two approaches make $MMRFD$ significantly lightweight. The Fig \ref{fig:remo} shows the details hardware architecture of $\gamma_i$ $Gen$ and \texttt{REMO}: $\gamma_i^f$. 
\begin{figure*}[!htb]
\centering
\vspace{-5pt}
\includegraphics[width=0.8\textwidth]{./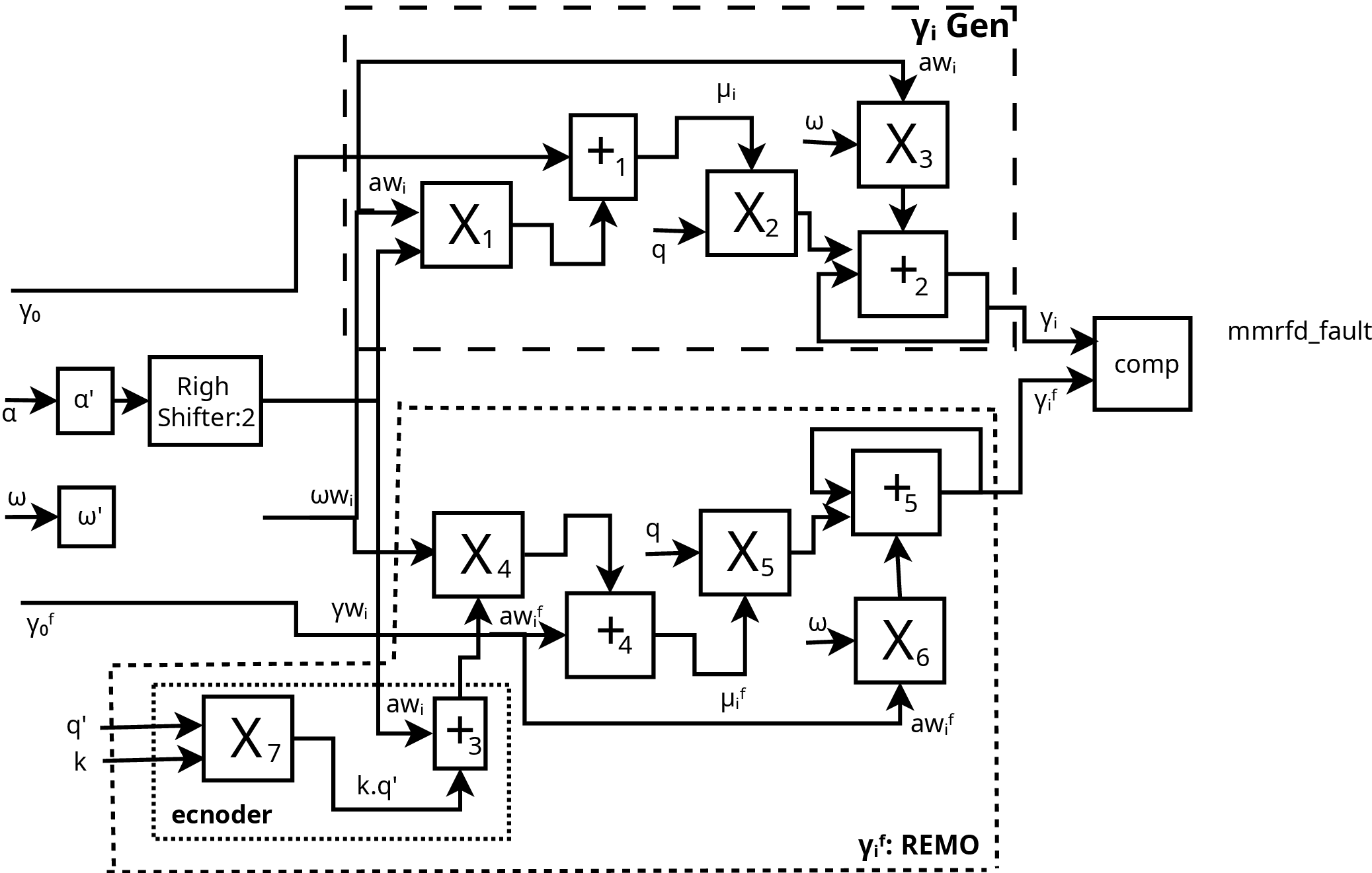}
\vspace{-5pt}
\caption{Hardware Architecture of REMO}
\vspace{-5pt}
\label{fig:remo}
\end{figure*}
\subsubsection{$\gamma_i$ $Gen$}
The $\gamma_i$ $Gen$ calculates lines \ref{line:mui} and \ref{line:gammai} of Algorithm \ref{algo:mmrfd}.
The $\times_1$, $\times_2$ and $\times_3$ in $\gamma_i$ $Gen$ are used to multiply $aw_i$ with $\omega'_0$, $\mu_i$ with $q$ and $\omega$ with $aw_i$ respectively. The $+_1$ and $+_2$ adders are used to compute $\mu_i$ and $\gamma_i$ respectively. If $p = 0$, the $+_2$ block provides the final output at the $\gamma_i$ register after $\frac{l}{w}$ cycles. If $p \neq 0$, the $+_2$ block provides the final output at the $\gamma_i$ register after $\frac{l}{w} + 1$ cycles.
\subsubsection{$REMO:$ $\gamma_i^f$}
The \texttt{REMO}: $\gamma_i^f$ calculates lines \ref{line:muif} and \ref{line:gammaif} of Algorithm \ref{algo:mmrfd}.
The $\times_4$, $\times_5$ and $\times_6$ in \texttt{REMO}: $\gamma_i^f$ are used to multiply $aw_i^f$ with $\omega'_0$, $\mu_i^f$ with $q$ and $\omega$ with $aw_i^f$ respectively. The additional multiplier and $\times_7$ is used to compute $k,q'$ in the encoder block. The $+_4$ and $+_5$ adders are used to compute $\mu_i^f$ and $\gamma_i^f$ respectively. One additional adder $+_3$ is used to calculate the encoded $aw_i^f$. If $p = 0$, the $+_5$ block provides the final output at the $\gamma_i^f$ register after $\frac{l}{w}$ cycles. If $p \neq 0$, the $+_2$ block provides the final output at the $\gamma_i^f$ register after $\frac{l}{w} + 1$ cycles. The comparator block ($comp$) compares $\gamma_i^f$ and $\gamma_i$. If they are not equal, the signal $mmrfd\_fault$ is asserted.
\section{Fault Detection in Memory Units}
\label{sec:mem}
NTT uses two types of memory units: RAMs and ROMs. RAMs are used to read and write polynomial coefficients during forward and inverse NTT operations. The constant terms in Learning With Errors (LWE)-based PQC algorithms, such as twiddle factors, are stored in ROM. Our fault detection method for the RAMs and ROMs of the NTT uses \texttt{Memory RC}, which is primarily constructed using \texttt{i-k RC} for RAMs and \texttt{i-j RC} for ROMs. These are presented in the next two sections based on the Kyber standard.
\subsection{Memory Address Protection in RAMs : \texttt{i-k RC}}
We have implemented Kyber-768, in which Forward NTT operations are applied during key generation, decryption, and encryption to the secret vector samples $s$, the ephemeral secret vector $r$, the error samples $e$, and the vector of polynomials $u$, which is the result of a matrix-vector multiplication with the public key matrix $A$. The Inverse NTT (INTT) operations are performed on $\hat{u}$, $\hat{us}$, $\hat{tr}$, $\hat{A^T} \circ \hat{r}$, and $\hat{t^T} \circ \hat{r'}$. Here $\hat{u}=NTT(Decompress(cipher))$, $\hat{us}=\hat{u}.\hat{s}$, $\hat{tr}=\hat{t}.\hat{r}$. $\hat{A^T}$ is the Generate matrix during encryption. For further details on Kyber, please refer to FIPS 203 \cite{fips203}. Our Kyber implementation uses one RAM for each polynomial, resulting in a total of 10 RAMs used for the key generation, decryption, and encryption processes. 

\begin{table}[!htbp]
    \centering
    \resizebox{\textwidth}{!}{%
\begin{tabular}{|c|c|
>{\centering\arraybackslash}p{1.2cm}|>{\centering\arraybackslash}p{1.2cm}|>{\centering\arraybackslash}p{1.2cm}|
>{\centering\arraybackslash}p{1.2cm}|>{\centering\arraybackslash}p{1.2cm}|>{\centering\arraybackslash}p{1.2cm}|
>{\centering\arraybackslash}p{1.2cm}|>{\centering\arraybackslash}p{1.2cm}|>{\centering\arraybackslash}p{1.2cm}|}
\hline
\multirow{2}{*}{\textbf{Block}} & \multirow{2}{*}{\textbf{NTTs}} 
& \multicolumn{3}{c|}{\textbf{Kyber-512 ($k=2$)}} 
& \multicolumn{3}{c|}{\textbf{Kyber-768 ($k=3$)}} 
& \multicolumn{3}{c|}{\textbf{Kyber-1024 ($k=4$)}} \\
\cline{3-11}
& & \# NTT Call & Total Hits on RAMs & Total ROM Hits &  \# NTT Call & Total Hits on RAMs & Total ROM Hits  &  \# NTT Call & Total Hits on RAMs & Total ROM Hits  \\
\hline\hline

\multirow{3}{*}{\textbf{KeyGen}} 
& NTT($s$) & 2 & 4096 &2048 & 3 &6144 & 3072& 4 &8192 &4096 \\
& NTT($e$) & 2 & 4096 & 2048 & 3 &6144 & 3072& 4 &8192 &4096 \\
& INTT & 0 & 0& 0& 0 &0 & 0& 0 & 0&0 \\
\hline
\multirow{3}{*}{\textbf{Encap}}
& NTT($r$) & 2 &4096 & 2048& 3 &6144 &3072 & 4 &8192 &4096 \\
& INTT($\hat{A^T} \circ \hat{r}$) & 2 &4096 &2048 & 3 &6144 & 3072& 4 & 8192&4096 \\
& INTT($\hat{t^T} \circ \hat{r}$) & 1 & 2048&1024 & 1 & 2048& 1024& 1 &2048 &1024 \\
\hline
\multirow{5}{*}{\textbf{Decap}}
& NTT($r$) & 2 & 4096& 2048& 3 &6144 & 3072& 4 &8192 &4096 \\
& NTT($u$) & 2 & 4096& 2048& 3 &6144 &3072 & 4 &8192 &4096 \\
& INTT($\hat{u}$) & 2 & 4096 & 2048& 3 &6144 & 3072& 4 &8192 &4096 \\
& INTT($\hat{us}$) & 1 &2048 &1024 & 1 &2048 & 1024& 1 & 2048&1024 \\
& INTT($\hat{tr}$) & 1 & 2048& 1024& 1 &2048 & 1024& 1 & 2048&1024 \\
\hline

\multicolumn{2}{|c|}{\textbf{total}}& 17  & \textbf{\textbf{34,816}}& \textbf{17,408}& 24 &\textbf{49,152} &\textbf{24,576} & 31 &\textbf{63,488} &\textbf{31,744 }\\

\hline
\end{tabular}
}
    \caption{Pattern of Memory hits in Kyber Variants using $i$, $j$ \& $k$}
    \label{tab:memory:hit}
\end{table}

In our Kyber, $i$ and $k$ indices are used to generate the addresses of 10 RAMs during read, write operations. In Table \ref{tab:memory:hit}, we present the number of RAM hits using the address bus $j$ during the key generation, decryption, and encryption processes for different variants of Kyber. The read and write operation on RAMs are considered as RAM hits. The data output of the RAMs is denoted by $\alpha$. 
The $i$ and $k$ patterns in each loop iteration during forward and inverse NTT is structured and hierarchical, where the upper bound $k$ are based on $i$. Therefore, we establish rules for each $CT$-$BU$ iteration: $k \leq s_i$, where $s_i$ is initialized to $n-1$ and is right-shifted by one bit every $256$ iterations. The $i-k$ RC takes $i$ and $k$ indices from $i-j-k~Gen$ and check the above mentioned rule. If the rule is not violated, the $ram~fault$ is set to $0$; otherwise, it is set to $1$.
\subsection{Memory Address Protection in ROM : \texttt{i-j RC}}
Kyber needs a ROM to store Twiddle factors ($\omega$), which required to read from ROM in each iteration of forward and inverse NTT. For more details about $\omega$, please refer to FIPS 203 \cite{fips203}. In our Kyber variants, the $i$ and $j$ patterns in each loop iteration during forward and inverse NTT are also structured and hierarchical, where the upper bound of $j$ depends on $i$, specifically $j \leq 2^i - 1$. The $i-j$ RC takes $i$ and $j$ indices from $i-j-k~Gen$ and check the above mentioned rule. If the rules is not violated, the $rom~fault$ is set to $0$; otherwise, it is set to $1$.

\section{Hardware Architecture of NTT with CT-BU, REMO\& Memory RC}
\label{sec:impl}
As shown in Fig. \ref{fig:ct_pipe}, the CT-BU unit has 3 pipeline stages: (i) For buffering $\omega$ and $U$ (ii) Calculation of $V$ and (iii) Update elements of point-wise representation. Our $CT-BU$ inside NTT iterates $log_2\times\frac{n}{2}-1$ times where $n-1$ is the degree of input polynomial. For our Kyber $n=256$. Therefore, the NTT in our Kyber iterates 1024 times. Line \ref{line:mmrfd} of Algorithm \ref{algo:intt} represents the $2^{nd}$ pipeline stage of the CT-BU, which computes the Montgomery modular reduction of $q$ on the product of $\omega$ and $\alpha[j+k+\frac{i}{2}]$.

\begin{figure}[!htb]
\centering
\vspace{-5pt}
\includegraphics[width=0.5\textwidth]{./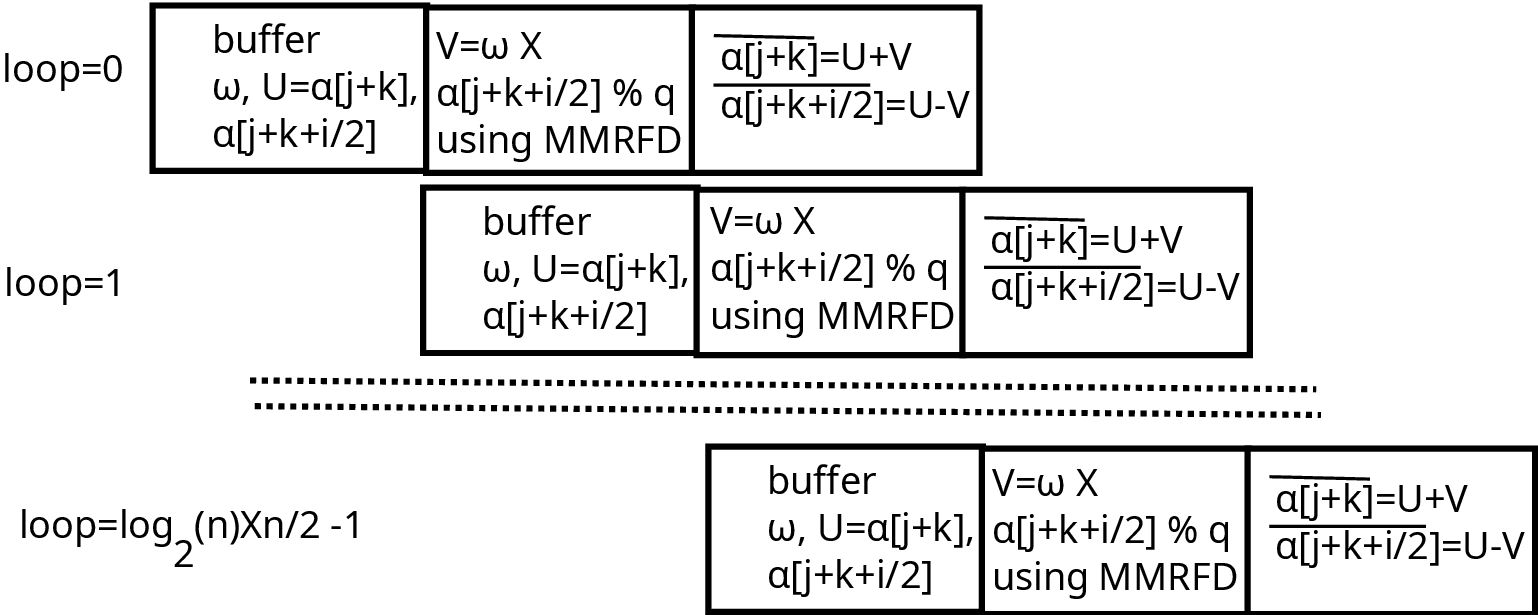}
\vspace{-5pt}
\caption{Pipeline Stages of CT-BU}
\vspace{-5pt}
\label{fig:ct_pipe}
\end{figure}

As shown in Fig. \ref{fig:ct_bf_arch}, the polynomial coefficients and twiddle factors $\omega$ are stored in RAMs and ROM, which can be accessed by various hardware blocks such as the polynomial multiplier, polynomial adder, NTT and etc. The specific hardware block that accesses the memory depends on the requirements of the PQC algorithm. The $addr$, $rd\_en$,  $wr\_en$ and $din$ of the memory block can be accessed by different hardware blocks by selecting inputs of the muxes controlled by $Control~Unit$. Similarly using $demux$, different hardware blocks can read  $\alpha$ from RAMs through $dout$. Then, $\alpha$ is connected to the $CT-BU$ for the $U$ and $V$ calculation (line \ref{line:u} and \ref{line:mmrfd} of Algorithm \ref{algo:intt}). The calculation of $V$ depends on the $MMRFD$ block which follows Algorithm \ref{algo:mmrfd} to multiply $\alpha$ and $\omega$ in a $w$ word-wise fashion. The $\gamma$ block shown in Fig. \ref{fig:ct_bf_arch}, takes $w$ bits named $aw_i$ at a time from the $\alpha$.  The $\gamma^f$ block shown in Fig. \ref{fig:ct_bf_arch}, takes $w$ bits named $aw_i^f$ at a time from the $\alpha$. The $encoder$ block calculates $aw_i^f$ (line \ref{line:awif} of Algorithm \ref{algo:mmrfd}) and sends it to the $\gamma^f$ block. The $\gamma$ block executes $u_i$ and $\gamma_i$ (line \ref{line:mui} and line \ref{line:gammai} of Algorithm \ref{algo:mmrfd}). The $\gamma^f$ block executes $u_i^f$ and $\gamma_i^f$ (line \ref{line:muif} and line \ref{line:gammaif} of Algorithm \ref{algo:mmrfd}). The modules $2^w$ operation to calculate $u_i$ and $u_i^f$ in the $\gamma$ block and $\gamma^f$ block, respectively, are performed by restricting the size of $u_i$ and $u_i^f$ to $w$ bits.  The division by $2^w$ operation to calculate $\gamma_i$ and $\gamma_i^f$ in the $\gamma$ block and $\gamma^f$ block, respectively, are performed by removing $w$ bits form the right side of the size of $\gamma_i$ and $\gamma_i^f$. These two bitwise operations replace resource- and latency-intensive modulus and division operations, significantly reducing slice usage and delay.
The $comp$ block compares $\gamma$ and $\gamma^f$. If both values match, $mmrfd\_fault=0$ and $V$ is used to calculate $\alpha[j+k]$ and $\alpha[j+k+\frac{i}{2}]$. Otherwise, $mmrfd\_fault=1$ is initiated. The $\alpha[j+k]$ and $\alpha[j+k+\frac{i}{2}]$ are calculated by the $adder$ and $sub$ blocks respectively.
\begin{figure}[!htb]
\centering
\includegraphics[width=0.75\textwidth]{./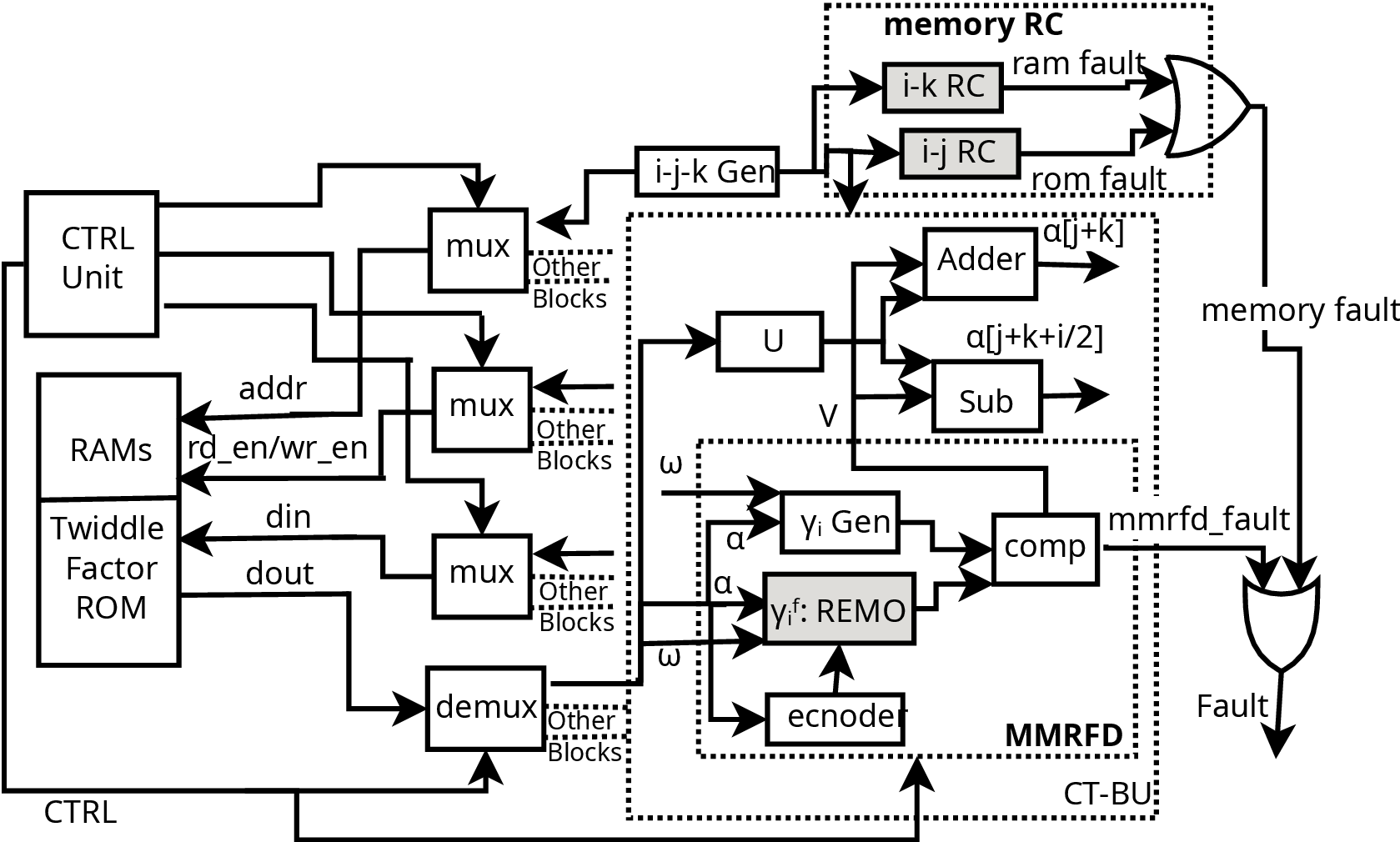}
\caption{NTT Architecture With CT-BU, REMO \& Memory RC}
\label{fig:ct_bf_arch}
\end{figure}

\begin{table}[htbp!]
\centering
\begin{tabular}{
|
>{\centering\arraybackslash}p{4cm}|
>{\centering\arraybackslash}p{1.5cm}|
>{\centering\arraybackslash}p{1.5cm}|
>{\centering\arraybackslash}p{1.5cm}|
>{\centering\arraybackslash}p{1.5cm}|
>{\centering\arraybackslash}p{1.5cm}|
>{\centering\arraybackslash}p{1.5cm}|
>{\centering\arraybackslash}p{1.5cm}|
}
\hline
\textbf{Architecture} & \textbf{n, q} & \textbf{SECs}& \textbf{Slices} & \textbf{LUTs} & \textbf{FFs} &\textbf{DSPs/ BRAMS}& \textbf{Power (mW)}  \\ \hline\hline
								
				\textbf{Kyber Montgomery (Baseline)} & 256,                   & 173 &73 & 242 & 100 &1/0 &104  \\ \cline{1-1} \cline{3-8}
				\textbf{Kyber Montgomery (Protected)} &3329& 287&89 & 275 & 139 & 2/0&107  \\
				\hline \hline
				\textbf{CRYSTALS-Dilithium Montgomery (Baseline)}   &256,   &  150    &50 &121  &111 &1/0&100  \\ \cline{1-1} \cline{3-8}
				\textbf{CRYSTALS-Dilithium Montgomery (Protected)} & 8380417&274  &74&171 &148&2/0&102  \\ \hline\hline	
				
				\textbf{Falcon Montgomery (Baseline)}               & 512,          &136&36 &84&70&1/0&99   \\ \cline{1-1} \cline{3-8}
				\textbf{Falcon Montgomery (Protected)} & 12289 &248&48&128&103&2/0&101  \\ \hline\hline	
				
				\textbf{NTRU Montgomery (Baseline)}                   & 2048,   &132& 32 & 84 & 78 & 1/0&106 \\ \cline{1-1} \cline{3-8}
				\textbf{NTRU Montgomery (Protected)} &12289  &249& 49 & 122 & 120 & 2/0&110 \\ \hline\hline
				
				\multicolumn{8}{|c|}{\textbf{Artix-7 (xc7a100tcsg324-3), w=4, clock=100MHz}} \\ \hline
\end{tabular}
			\vspace{2pt}
			\caption{Overhead of REMO \& Memory RC in Different PQC Algorithms}
			\label{tab:compalgo}
\end{table}

\section{Results \& Discussions}
\label{sec:res}
This section first covers the fault detection scheme's overheads, then its coverage.
\subsection{Overheads}
The design is implemented on an $Artix$-$7$ $(xc7a100tcsg324$-$3)$ FPGA using the $Vivado$ $22.02$ tool and the VHDL language. The proposed fault detection model is implemented in Kyber, Crystal Dilithium, Falcon and NTRU algorithms.The implementation costs of the proposed Fault Detection ($FD$) model, which includes both \texttt{REMO} and \texttt{Memory RC}, for the Kyber, CRYSTALS-Dilithium, Falcon, and NTRU algorithms are presented in Table \ref{tab:compalgo}. These costs are measured in terms of slices, LUTs, flip-flops, DSPs, and power consumption. Additionally, we calculate the Slice Equivalent Cost (SEC) as defined in Equ. \ref{equ:sec}, following the method described in \cite{liu}.
\begin{equation}
\label{equ:sec}
SEC=Slices+ DSPs \times  100+ BRAMS \times 200    
\end{equation}
Our \texttt{REMO} model for the Kyber standard, with $n=256$ and $q=3329$, utilizes only 8 slices (comprising 15 LUTs and 8 Flip-Flops) and a single DSP block, as shown in Table \ref{tab:comp}. Additionally, \texttt{Memory RC} consumes 8 more slices.
Table \ref{tab:lit} shows a comparison of error coverage and overheads in terms of area, delay, and energy with existing fault detection literature.
The number of extra slices and DSP required for $MMR$ to detect faults using \texttt{REMO} are $8$ and $1$ respectively. 
  The \texttt{Memory RC} also required 8 slices. Therefore, the total Area Overhead ($AO$) of fault detection of $NTT$ is shown in Equ. \ref{equ:ao}.
  \begin{align}
 \label {equ:ao}
 AO=& \frac{SEC~of~REMO+SEC~of ~memory~RC}{SEC~of~CT-BU}\times 100\\
 &=\frac{108+8}{1356} \times 100=8.5\% \nonumber
  \end{align}
  \par 
  As shown in Table \ref{tab:lit}, we compare our fault detection model with other fault detection solutions designed for NTT and other different computational units of cryptographic algorithms. While the solutions different computational units of cryptographic algorithms are not directly comparable to our proposed fault detection method, this comparison provides an overview of the implementation cost required to achieve a given level of error coverage.
As shown in Table \ref{tab:lit}, this fault detection module has a slice overhead of $8.5\%$ compared to the unprotected NTT. The fault detection hardware consumes only $3\text{mW}$ of power, resulting in a $1.8\%$ power overhead compared to the unprotected NTT. The proposed fault detection method runs in parallel with the NTT component. Adopting this proposed fault detection logic into the NTT does not impact the critical path, clock period, or the number of clock cycles required for the baseline NTT. Consequently, the implementation incurs a $0\%$ delay, $8.5\%$ slice overhead and $1.8\%$ energy overhead, which is highly reasonable and competitive with the existing literature for detecting $87.2\%$ to $100\%$ fault occurrences. It is to be noted that the our fault detection unit operates with a delayed clock compared to the main $NTT$. This is done to enable the detection of both transient and permanent faults.

\begin{figure}[!htb]
\centering
\includegraphics[width=0.5\textwidth]{./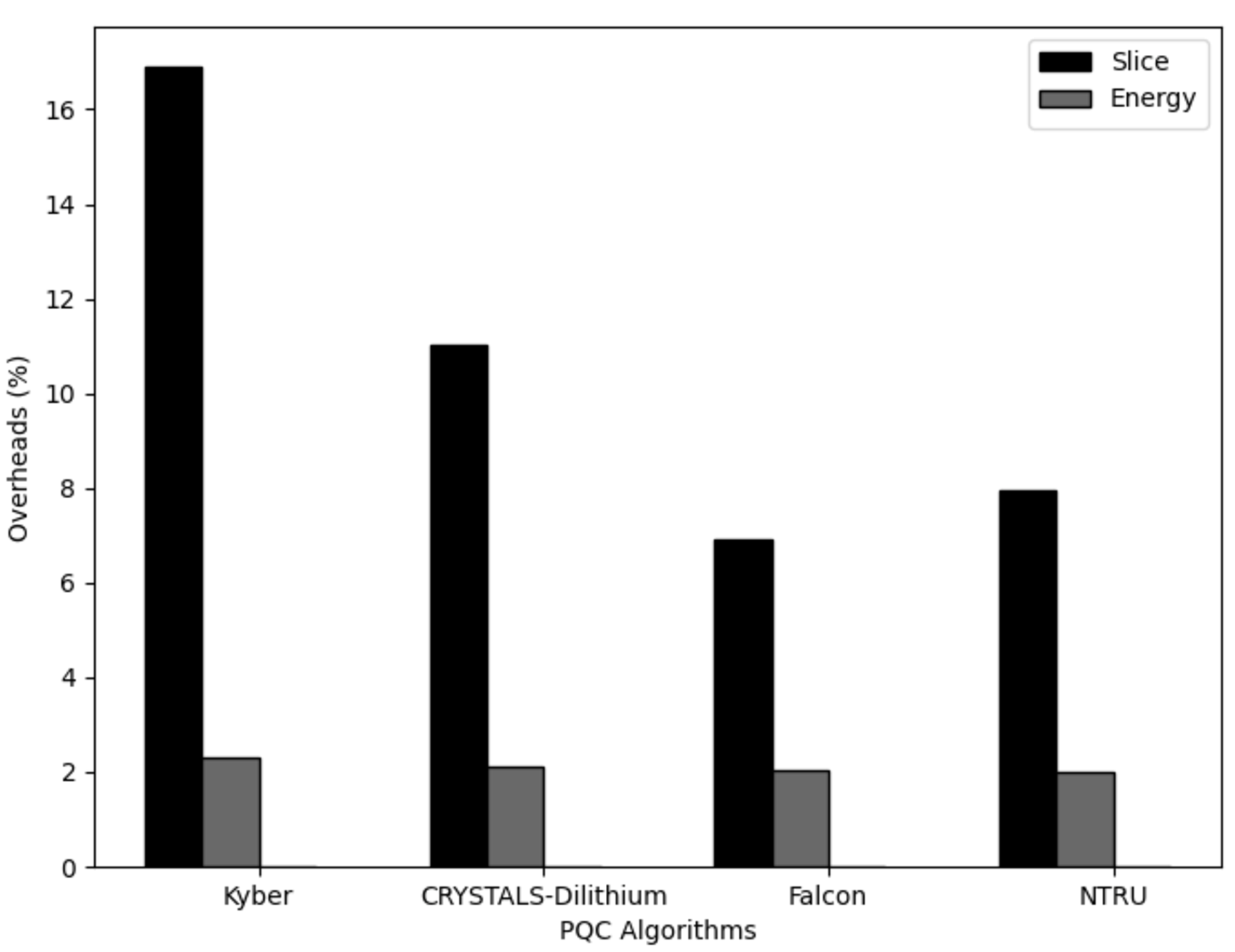}
\vspace{-5pt}
\caption{Overhead of REMO  \& Memory RC for Different PQC Algorithms}
\label{fig:graph}
\end{figure}

\begin{table*}[ht]
    \centering
    \begin{tabular}{|>{\centering\arraybackslash}p{1.4cm}|
                    >{\centering\arraybackslash}p{0.4cm}|
                    >{\centering\arraybackslash}p{0.7cm}|
                    >{\centering\arraybackslash}p{0.7cm}|
                    >{\centering\arraybackslash}p{0.7cm}|
                    >{\centering\arraybackslash}p{0.7cm}|
                    >{\centering\arraybackslash}p{0.8cm}|
                    >{\centering\arraybackslash}p{0.8cm}|
                    >{\centering\arraybackslash}p{1.15cm}|}
    \hline
    \textbf{Block Names} & \textbf{SEC} & \textbf{Slices} & \textbf{LUTs} & \textbf{FFs} & \textbf{DSPs} & \textbf{BRAMs} & \textbf{Power (mW)} & \textbf{Critical Path (ns)} \\
    \hline
    Kyber-768 (baseline) & 3395 & 1795 & 6008 & 4404 & 2 & 7 & 425 & 9.79 \\
    \hline
    NTT/INTT (baseline) & 1356 & 556 & 1711 & 1204 & 2 & 3 & 163 & 9.39 \\
    \hline
    CT BU (baseline) & 685 & 285 & 778 & 207 & 2 & 1 & 127 & 9.39 \\
    \hline
    MMR (baseline) & 173 & 73 & 242 & 100 & 1 & 0 & 104 & 9.87 \\
    \hline
    \textbf{\texttt{REMO}} & 108 & 8 & 15 & 8 & 1 & 0 & 2 & 9.11 \\
    \hline
    \textbf{\texttt{Memory RC}} & 8 & 8 & 18 & 31 & 0 & 0 & 1 & 7.64 \\
    \hline
    \multicolumn{9}{|c|}{\textbf{Artix-7 (xc7a100tcsg324-3), n=256, q=3329, l=12, w=4, clock=100MHz}} \\
    \hline
    \end{tabular}
    \caption{Overhead of REMO \& Memory RC for Kyber Standard}
    \label{tab:comp}
\end{table*}

 	 \begin{table}[!htbp]
 		\centering
	\begin{tabular}{|p{0.5cm}|>{\centering\arraybackslash}p{2.7cm}|>{\centering\arraybackslash}p{2.5cm}|>{\centering\arraybackslash}p{1.8cm}|>{\centering\arraybackslash}p{1.8cm}|>{\centering\arraybackslash}p{1.6cm}|>{\centering\arraybackslash}p{1.6cm}|>{\centering\arraybackslash}p{1.7cm}|}
	\hline
	\textbf{Work}& \textbf{Type of Fault } & & \textbf{Baseline} &\multicolumn{3}{c|}{\textbf{Overhead (\%)}}& \textbf{(\%) Error} \\ \cline{5-7}
	
	& \textbf{Detection} & \textbf{Target HW} & \textbf{(for Overhead)} &\textbf{Area} & \textbf{Delay} & \textbf{Energy} &  \textbf{Coverage} \\ \hline

	\cite{canto}  &CRC5 & sub, add of McEliece crypto &McEliece crypto & 18.33   & 11.25   & $\sim$0  & $>$99.9 \\ \hline

	\cite{saeed}  &REMO & $x^y~mod~n$ &$x^y~mod~n$& 0.8   & 0.27   & 0.65  & 97.1-100 \\ \hline

	\cite{howe}*  &look up table based output distribution check low cost/ standard/ expensive & CDT Error Sampler &CDT Error Sampler & 9.09/ 77.4/ 18.2    & NR   & NR & NR\\ \hline	
	\cite{canto2}  &RESO & MAC unit of Saber/ NTRU/ FrodoKEM & Saber/ NTRU/ FrodoKEM & 36.6/39.6/ 28.4  & 28.3/16.7/ 32.7   & 1.2/3.2/ $\sim$0 & $>$99.9 \\ \hline
	\cite{kermani}  &Recomputing with swapped ciphertext & Galois Counter Mode & AES-GCM & 4.9/6.7   & NR   & NR  & 100 \\ \hline	
	
	\cite{kamal}  &Spatial duplication & NTRU&NTRU& 6.22   & NR   & NR  & 100 \\ \hline
	\cite{cintas}  &1/2/3-bit parity& Goppa Arithmetic &McEliece& 9.8/11.3/9.6   & 1.4/0.8/1   & 2.7/2.7/2.7  & 100 \\ \hline\hline
	 \cite{jati}  &Randomized Memory Addr. & NTT &Kyber& NR   & NR   & NR  & NR\\\hline
	  \cite{rafael}  & Local Masked & NTT &NTT& NR   & NR   & NR  & NR\\\hline
	  \cite{ravi2}  & Randomized Memory Addr. + Local Masked & NTT &NTT& NR   & NR   & NR  & NR\\\hline
	 \cite{khan}  &Hamming A-O/ R-O & NTT Memory &Kyber& 16.4/ 19.2   & 0/ 0   & NR  & NR\\\hline
  \cite{khan}  &Hamming+ Parity A-O/ R-O & NTT Memory &Kyber& 10.8/ 21.5   & 1.6/ 1.44   & NR  & NR\\ \hline
   \cite{sven}  &Polynomial Evaluation and Interpolation & CT-BU &Dilithium& NR   & 72   &  NR  & NR\\ \hline
	\cite{sarker}  &RENO Spartan7 v1/v2/v3 &Butterfly Unit& NTT& 20.2/15.3/21.5   & 8.46/15.88/13.71   & 15.6/7.6/11.2  & 99.51/99.67\ /99.41 \\ \hline
        \cite{sarker}  &RENO Zynq v1/v2/v3 & Butterfly Unit&NTT& 24/7.5/17  & 9.32/19.66/ 21.78   & 20.47/13.27/ 17.26  & 99.51/99.67 /99.41 \\ \hline
        
\multirow{3}{*}{\textbf{Our*}} 
& \multirow{3}{*}{\textbf{REMO + Memory RC}} 
& \multirow{3}{*}{\textbf{CT-BU \& Memories}} 
& \textbf{CT-BU \& Memories} & \textbf{16.9} & \textbf{0} & \textbf{2.3} & \multirow{3}{*}{\textbf{\makecell{87.2-100 \\(REMO),\\50.7-100 \\(Memory RC)}}} \\ \cline{4-7}
& & & \textbf{NTT} & \textbf{8.5} & \textbf{0} & \textbf{1.8} & \\ \cline{4-7}
& & & \textbf{Kyber-768} & \textbf{3.18} & \textbf{0} & \textbf{0.7} & \\ \hline
\end{tabular}
\vspace{2pt}
\\Note: NR = Not Reported; $*$ =Eq. \ref{equ:sec} is used to calculate area overhead.\\
 		\caption{Overhead Comparison with literature}
 	\label{tab:lit}
 \end{table} 
\subsection{Error Coverage}
To measure the error coverage of the proposed \texttt{REMO} and \texttt{Memory RC} schemes, we simulated the fault injection process using Python on a $i5$ processor with $8$ GB of RAM, running Ubuntu $24.04$. Both the simulation process uses two fault modes: random faults and burst faults. The random mode flips $\eta$ number of bits randomly, whereas the burst mode flips $\eta$ number of consecutive bits.   In both fault modes, bit flips mean that '$0$'(s) are turned into '$1$'(s) and '$1$'(s) are turned into '$0$'(s).
\subsection{Error Coverage of REMO}
This simulation method utilized $1.5$ million samples.
The random and burst fault modes inject faulty bits into $\alpha$, $\omega$, and both $\alpha$ and $\omega$. Table \ref{tab:fault:remo} shows that the fault detection efficiency varies from 87.2\% to 100\%, depending on the word size $w$, the number of faulty bits $\eta$ and the fault mode.
From Table \ref{tab:fault:remo} three conclusions can be drawn: 
 \begin{itemize}
  \item The size of $w$ has minimal impact on fault detection efficiency but significantly affects the area, as measured by SECs.
  \item As the number of faulty bits increases, the fault detection efficiency improves. 
  \item The fault detection efficiency is higher in random mode compared to burst mode. 
  \end{itemize}

  \begin{table}[htbp!]
	\centering
	\begin{tabular}{|>{\centering\arraybackslash}p{0.3cm}|>{\centering\arraybackslash}p{2.0cm}|>{\centering\arraybackslash}p{1.5cm}|>{\centering\arraybackslash}p{1.5cm}|>{\centering\arraybackslash}p{1.5cm}|>{\centering\arraybackslash}p{1.5cm}|>{\centering\arraybackslash}p{1.5cm}|>{\centering\arraybackslash}p{1.5cm}|>{\centering\arraybackslash}p{1.5cm}|}
		\hline
		\textbf{w}&\textbf{\#} & \multicolumn{6}{c|}{\textbf{Fault Detection Efficiency(\%)}}&\textbf{SECs}\\ \cline{3-8}
		& \textbf{fault}  & \multicolumn{2}{c|}{ \textbf{fault in $\alpha$}} & \multicolumn{2}{c|}{\textbf{fault in $\omega$}} & \multicolumn{2}{c|}{\textbf{fault  in $\alpha$ \& $\omega$}}& \textbf{of REMO}\\ \cline{3-8}
		       & \textbf{bits($\eta$)}   & \textbf{random}&\textbf{burst} &\textbf{random}&\textbf{burst}   &\textbf{random}&\textbf{burst} &\textbf{ (Eq. \ref{equ:sec})} \\ \hline
		       	& 1   &  87.24&- & 98.33& -  &98.89& -&  \\
			& 3   &  96.72&93.67 &100 &99.99  &100&100&   \\
			& 5   &99.03  &96.83 &100 &100  &100&100 &  \\
		2	& 11   &99.95  &99.61 &100 &99.99  &100&100 &103  \\
			& 17   &99.99  &99.95 & 100&100  &100&100 &  \\
			& 23   &99.99  & 99.99& 100&100  &100& 100&  \\\hline
			\hline
		       & 1   &  87.21&- &98.26& -   &98.89&   -&  \\ 
		       & 3   &   95.94&90.34&  99.01& 99.97 & 100&  100&   \\ 
			& 5   &  98.3& 93.68&   100&100 &  100&100&  \\ 
		4	& 11  & 99.68&97.63  &100 &99.99   &100&100 & 108   \\ 
			& 17  & 99.87&  99.27&100&100    &100&100&     \\ 
			& 23  & 99.98& 99.79 & 100& 100  &100& 100 &   \\ 
			\hline\hline
			& 1   &  87.2&- &97.65&-  &98.87& - &\\ 
			& 3   &  94.43&88.51 &99.99&99.81    &100& 100&  \\ 
			& 5   &  96.55&89.98 &100 &99.99   &100 &  100&\\
		8	& 11   &  98.19 &94.28&100&100    &100&  100&179 \\
			& 17   &  99.03&97.1 &100 &99.99   &100 & 100& \\
			& 23   &  99.85 &98.33&100 &99.99   &100& 100 & \\
		\hline
\multicolumn{9}{|c|}{\textbf{l=24, sample size=1.5 million}}\\\hline		
		
	\end{tabular}
	\caption{Error Detecting Efficient for $\eta$ bit Random \& Burst Flipping using REMO}
	\label{tab:fault:remo}
\end{table}

\subsection{Error Coverage of Memory RC}
To measure the error coverage of the \texttt{Memory RC}, we executed the key generation, decryption and encryption processes of Kyber-768 a total of $300$ times. Each execution of Kyber-768 results in $49,152$ RAM accesses (as shown in Table \ref{tab:memory:hit}) and $24,576$ ROM accesses to read the twiddle factors. Among these 300 executions, faults were injected using three different modes, with 100 executions allocated to each mode: (i) faults in both RAM and ROM, (ii) faults only in ROM, and (iii) faults only in RAM. Table \ref{tab:fault:memory} shows that the fault detection efficiency varies from 50.7\% to 100\%, depending on the number of faulty bits $\eta$ and the fault mode.
\begin{table}[!htbp]
\centering
\caption{Error Detecting Efficient for $\eta$ bit Random \& Burst Flipping using Memory RC for Kyber Standard}
\resizebox{\textwidth}{!}{%
\begin{tabular}{|c|cc|cc|cc|}
\hline
\multirow{3}{*}{\textbf{\# Faulty Bits ($\eta$)}} 
& \multicolumn{6}{c|}{\textbf{Fault Detection Efficiency(\%)}} \\\cline{2-7}

& \multicolumn{2}{c|}{\textbf{Faulty addr in RAMs and ROM (j \& k)}} 
& \multicolumn{2}{c|}{\textbf{Faulty addr in ROM  (j only)}} 
& \multicolumn{2}{c|}{\textbf{Faulty addr in RAMs  (k only)}} \\
\cline{2-7}
 & \textbf{Random (\%)} & \textbf{Burst (\%)} 
 & \textbf{Random (\%)} & \textbf{Burst (\%)} 
 & \textbf{Random (\%)} & \textbf{Burst (\%)} \\
\hline
1 & 87.79 & -     & 53.63 & -     & 50.07 & -     \\
2 & 98.57 & 93.78 & 67.53 & 58.13 & 66.53 & 56.12 \\
3 & 99.93 & 98.16 & 75.33 & 63.54 & 75.01 & 62.35 \\
4 & 100   & 100   & 80.27 & 69.34 & 79.99 & 68.76 \\
5 & 100   & 100   & 83.41 & 75.37 & 83.41 & 75.11 \\
6 & 100   & 100   & 85.81 & 81.44 & 85.81 & 81.24 \\
7 & 100   & 100   & 87.5  & 87.5  & 87.5  & 87.5  \\
\hline
\end{tabular}%
}
\label{tab:fault:memory}
\end{table}

\section{Conclusion}
\label{sec:con}
In this manuscript, we present a light wight modified Montgomery reduction integrated within a CT-BU for fault detection, capable of addressing both permanent and transient faults. It uses the \texttt{REMO} method. The results demonstrate that our fault detection scheme achieves a high fault detection rate with minimal resource and power overhead, without affecting the critical path of the original design. Although the fault detection scheme proposed in this paper is specifically designed for Montgomery reduction within the CT-BU, it can also be applied to any hardware implementing polynomial multiplication with modular reduction, where Montgomery reduction is utilized. The \texttt{Memory RC} can detect between 50.7\% and 100\% of faults in the memory units used in the NTT. To the best of our knowledge, our fault detection method has one of the lowest slice overheads among existing fault-tolerant techniques in the literature on PQC. It is important to note that we have not explored fault detection methods for the contents of the memory units, as several efficient techniques such as hamming codes, parity bits, and CRC are already well-established in the literature \cite{khan} for protecting memory contents. The code of this work is uploaded to GitHub \footnote{https://github.com/rourabpaul1986/NTT}.\\
\textbf{Acknowledgment}
This publication has emanated from research conducted with the financial support of Taighde Éireann - Research Ireland under Grant number 13/RC/2077\_P2 at CONNECT: the Research Ireland Centre for Future Networks.

\bibliographystyle{unsrt}  
\bibliography{example}
\nocite{*}

\end{document}